\pgfplotsset{compat = newest}
\newtheorem{mytheo}{Theorem}
\newtheorem{myprop}{Proposition}
\newcommand{\R}{\mathbb{R}}
\newcommand{\N}{\mathbb{N}}
\newcommand{\C}{\mathcal{C}}
\newcommand{\X}{\mathcal{X}}
\newcommand{\bm}[1]{{\boldsymbol{#1}}}
\DeclareMathOperator{\var}{var}
\DeclareMathOperator{\rank}{rank}
\DeclareMathOperator{\kernel}{ker}
\newcommand{\GP}{\mathcal{GP}}
\title{\LARGE \bf
Stability of Gaussian Process State Space Models
}
\author{Thomas Beckers and Sandra Hirche
\thanks{T. Beckers and S. Hirche are with the Chair of Information-oriented Control (ITR), Department of Electrical and Computer Engineering,
Technical University of Munich, D-80333 Munich\newline
{\tt\small \{t.beckers, hirche\}@tum.de}}
}
\begin{document}

\maketitle
\thispagestyle{empty}
\pagestyle{empty}

\begin{abstract}

Gaussian Process State Space Models (GP-SSMs) are a non-parametric model class suitable to represent nonlinear dynamics. They become increasingly popular in data-driven modeling approaches, i.e. when no first-order physics-based models are available. Although a GP-SSM produces well-behaved approximations and gains increasing popularity, the fundamental system dynamics are just sparsely researched. In this paper, we present stability results for the GP-SSM depending on selected covariance function  employing a deterministic point of view as widely done in the literature. The focus is set on the squared exponential function which is one of the most used covariance functions for nonlinear regression. We start with calculations according to the equilibrium points of GP-SSM and continue with conditions for stability.
\end{abstract}

\section{Introduction}
Model-based control is a widespread technique for the control of dynamic systems. Most current  methods employ parametric model descriptions, i.e. for linear systems ARX or ARMAX. For nonlinear dynamics, the model-building process is much more complicated and it is often restricted to a specific class of systems. Just few popular approaches, such as NARMAX or Volterra series models, are available. A strong limitation of such identification strategies is that in many cases parametric models from first order physics laws are difficult to obtain. Especially, for complex systems such as human motion~\cite{wang2008gaussian} or gas-liquid separation~\cite{likar2007predictive} non-parametric techniques hold great promise. One popular possibility is to treat the systems as distribution over functions and use Gaussian Process State Space Models (GP-SSMs) to describe the nonlinear dynamic of the systems~\cite{frigola2014variational}. GP-SSMs become more and more popular in system identification for their favorable properties such as the bias variance trade-off and the strong Bayesian mathematics background.\\
A Gaussian Process generates data located throughout some domain such that any finite subset of the range follows a multivariate Gaussian distribution. This offers a powerful tool for nonlinear function regression with little prior knowledge~\cite{rasmussen2006gaussian}. The output of a GP prediction is a normal distributed variable which is uniquely defined by mean and variance. A Gaussian Process State Space Model is the application of a Gaussian Process to model a dynamic system, see e.g.~\cite{kocijan2005dynamic}. The GPs are trained by some input-output pairs of the system. Afterwards, they can estimate the mapping between the input and the output in untrained state space regions. Although Gaussian Process State Space Models become increasingly popular~\cite{kocijan2016gaussian} and start to be successfully used in control theory, e.g. for adaptive control~\cite{rogers2011adaptive}, the system properties of the GP-SSM are only sparsely researched.\\
In most of the works, where a GP-SSM is considered in a control setting, just the mean function of the process is employed, see e.g.~\cite{wang2005gaussian} and~\cite{chowdhary2013bayesian}. This is mainly because the GP is often used for replacing other deterministic methods. In order to provide rigorous guarantees on the system behavior, stability properties of GP-SSMs need to be well-understood, see e.g.~\cite{kocijan2003predictive} and~\cite{avzman2008non}. For linear system identification exists a stable kernel approach which include information on impulse response stability~\cite{chen2012estimation}. Chowdhary et al. presented a stability proof of an adaptive control approach with a Gaussian Process uncertainty model~\cite{chowdhary2012model} for nonlinear systems which is based on a bounded error model. A related model class are Gaussian Mixture Models (GMMs) which assume that every data point is generated from a mixture of a finite number of Gaussian distributions. Khansari-Zadeh et al. show for GMMs a syntheses approach for learning stable trajectories of a nonlinear dynamical system with GMMs~\cite{khansari2011learning}. In fact, it has been widely acknowledged, e.g. in~\cite{kocijan2005nonlinear}, that stability issues of GP-SSMs require careful attention in the future. The fundamental stability analysis of Gaussian Process State Space Models is still open.\\
The contribution of this paper is the study of equilibria of Gaussian Process State Space Models and their stability properties in terms of Lyapunov stability and ultimate boundedness. GP-SSMs with a linear, polynomial and the widespread squared exponential covariance function are analyzed. We determine the number of equilibrium points and present stability conditions for these models. The derived results are illustrated in numerical simulations.\\
The remainder of this paper is organized as follows: In Section II we introduce definitions about Gaussian Process State Space Models. In Section III the equilibrium points of GP-SSMs are analyzed. Stability conditions for GP-SSMs are presented in Section IV. Finally, Section V presents some illustrations of the previous proofs.

\textbf{Notation:} Bold characters are used for vectors and vector-valued functions. Matrices are denoted by capital letters. The expression~$\mathcal{N}(\mu,\Sigma)$ describes a normal distribution with mean~$\mu$ and covariance~$\Sigma$. The euclidean norm is given by~$\Vert\cdot\Vert$. The mean and variance of a probability variable is written as~$\mu(\cdot)$ and~$\var(\cdot)$.

\section{Modeling with GP-SSMs}
In this section, we start with the necessary background information about Gaussian Processes and their application for GP-SSMs. 
\subsection{GP Definition}
Let~$(\Omega, \mathcal{F},P)$ be a probability space with the sample space~$\Omega$, the~$\sigma$-algebra~$\mathcal{F}$ and the probability measure~$P$. The set~$\X \subseteq \R^n$ with~$n\in\N^*$ denotes a corresponding index set. A stochastic process is a discrete or real valued function~$f(\bm x, \omega)$ that for every fixed~$\bm x\in\X$ is a measurable function of~$\omega\in\Omega$. For fixed~$\omega\in\Omega$, the function~$f(\bm x, \omega)$ becomes a deterministic function of~$\bm x$. This function is known as sample path or realization of the stochastic process. If~$\bm x\in\X$ is fixed, the function~$f(\bm x, \omega)$ is a random variable on~$\Omega$. A Gaussian Process is such a stochastic process which can also be interpreted as a distribution over functions. Therefore, it describes a probability distribution over an infinite dimensional vector space. Gaussian Processes are fully specified by a mean function~$m(\bm x)\in\C^0$ and a covariance function~$k(\bm x,\bm x^\prime)\in\C^0$, which is also known as kernel function. The elements of the index set~$\X$ are called states.
\begin{align}
&f(\bm x) \sim \GP(m(\bm x),k(\bm x,\bm x^\prime)),\qquad \bm x,\bm x^\prime\in\X\\
&m(\bm x)\colon\X\to\R,\,k(\bm x,\bm x^\prime)\colon\X\times \X\to\R
\end{align}
The value of the covariance function~$k(\bm x,\bm x^\prime)$ is an indicator of the interaction of two states~$(\bm x,\bm x^\prime)$.  
In practice, the mean function is often set to zero, as this simplifies calculations without limiting the expressive power of the process. The choice of the covariance function and its parameters is a degree of freedom of the GP regression. The essential part in GP model learning is the selection of the function~$k(\bm x,\bm x^\prime)$ and the estimation of its free parameters~$\varphi$, called hyperparameters. Common covariance functions include the squared exponential, the linear, and the polynomial covariance function, see Table~\ref{tab:kernel}.
\renewcommand{\arraystretch}{1.5}
\begin{table}[h]
\begin{center}
\begin{tabular}{|p{2.4cm}|l|p{2.2cm}|}
\hline 
Covariance function &$k(\bm x,\bm x^\prime)=$ &  hyperparameters $\varphi$\\ 
\hline 
linear & $\bm x^\top \bm x^\prime+ \sigma_0^2$ & $\{\sigma_0\in\R_+\}$ \\ 
\hline 
polynomial & $\left(\bm x^\top \bm x^\prime+ \sigma_0^2 \right) ^p$ & \parbox[c][0.9cm]{1.4cm}{\begin{align*} \{\sigma_0&\in\R_+,\\p&\in\N\vert p\geq 2\}\end{align*}}\\ 
\hline 
squared exponential & $\sigma_f^2 \exp{\left(-\frac{\Vert \bm x- \bm x^\prime \Vert^2}{2\lambda^2} \right) }$ & \parbox[c][0.9cm]{1.4cm}{\begin{align*} \{\sigma_f&\in\R_+,\\ \lambda&\in\R_+^*\} \end{align*}}\\
\hline 
\end{tabular} 
\end{center}
\caption{Summary of some commonly-used covariance functions.\label{tab:kernel}}
\end{table}
\renewcommand{\arraystretch}{1}

The hyperparameters~$\sigma_0^2$ and~$\sigma_f^2$ describe the signal variance which determines the average distance of the function~$f(\bm x)$ away from its mean. A Gaussian Process with a linear covariance function is a Bayesian linear regression with variance~$\sigma_0^2$. More flexibility provides the polynomial function which allows to learn nonlinear models. Probably the most widely used covariance function in machine learning is the squared exponential covariance function, see~\cite{rasmussen2006gaussian}, with the related hyperparameters~$\{\sigma_f,\lambda\}$. The length-scale~$\lambda$ determines the number of expected upcrossing of the level zero in a unit interval by a zero-mean GP. This covariance function is infinitely differentiable which means that the GP exhibits a smooth behavior. A more detailed discussion about the advantages of different kernel functions can be found, for example, in~\cite{mackay1997gaussian} and~\cite{bishop2006pattern}.
\subsection{Gaussian Process State Space Models}
A Gaussian Process State Space Model for autonomous, discrete systems maps the current state~$\bm x_k$ to the next step ahead state~$\bm x_{k+1}$.
\begin{align}
\begin{split}
\bm x_{k+1}&=\bm f(\bm x_k),\qquad k\in\N\\
\bm f(\bm x_k)&\sim \GP(\bm m(\bm x_k),\bm k(\bm x_k,\bm x^\prime_k))
\end{split}
\label{for:GPSSS}
\end{align}
where the vector~$\bm x_k\in\X$ represents the state of the system. The vector function~$\bm m(\cdot)=[m_1(\cdot),\ldots,m_n(\cdot)]^\top$ contains the mean functions for each component of~$\bm x_{k+1}$. The function~$\bm k(\cdot,\cdot)=[k_{\varphi_1}(\cdot,\cdot),\ldots,k_{\varphi_n}(\cdot,\cdot)]^\top$ is composed of covariance functions where~$\varphi_i$ is the corresponding set of hyperparameters, see Table~\ref{tab:kernel}.  Due to the fact, that the GP can only map to a one dimensional space, a~$n$-dimensional system needs~$n$ GPs. So the representation (\ref{for:GPSSS}) is defined by
\begin{align}
\bm f(\bm x_k)=
\begin{cases} 
f_1(\bm x_k)\sim \GP(m_1(\bm x_k),k_{\varphi_1}(\bm x_k,\bm x^\prime_k))\\
\vdots\hspace{0.9cm}\vdots\hspace{0.5cm}\vdots\\
f_n(\bm x_k)\sim \GP(m_n(\bm x_k),k_{\varphi_n}(\bm x_k,\bm x^\prime_k)).
\end{cases}
\end{align}
To predict~$\bm x_{k+1}$ for a given~$\bm x_k$ the GP-SSM is trained with training input and output pairs. Suppose, we set the mean~$\bm m(\cdot)=\bm 0$ and we have~$m$ training inputs~$\{\bm{\tilde{x}}_{j_i}\}_{i=1}^m$ and outputs~$\{\bm{\tilde{x}}_{j_i+1}\}_{i=1}^m$ pairs with~$j_i\in\N,\bm{\tilde{x}}\in\X$. We arrange the data in an input training matrix which is defined by~$X=[\bm{\tilde{x}}_{j_1},\bm{\tilde{x}}_{j_2},\ldots,\bm{\tilde{x}}_{j_m}]$ and an output training matrix~$Y^\top=[\bm{\tilde{x}}_{j_1+1},\bm{\tilde{x}}_{j_2+1},\ldots,\bm{\tilde{x}}_{j_m+1}]$. Using the marginalization property, the prediction for each component of the one step ahead state vector~$x_{i,k+1}$ is calculated as Gaussian distributed variable with the mean~$\mu(x_{i,k+1})$ and the variance~$\var(x_{i,k+1})$. The joint distribution of the~$i$-th component of the predicted next step ahead state~$x_{i,k+1}$ and the corresponding vector of the training outputs~$Y$ is 
\begin{align}
\begin{bmatrix} Y_{1\ldots m,i} \\ x_{i,k+1} \end{bmatrix}\sim \mathcal{N} \left(\bm 0, \begin{bmatrix} K_{\varphi_i}(X,X) & \bm k_{\varphi_i}(\bm x_k,X)\\ \bm k_{\varphi_i}(\bm x_k,X)^\top & k_{\varphi_i}(\bm x_k,\bm x_k) \end{bmatrix}\right)
\end{align} 
where~$Y_{1\ldots m,i}$ is the~$i$-th column of the matrix~$Y$. The function~$K_{\varphi_i}(X,X)$ is called covariance matrix, and~$\bm k_{\varphi_i}(\bm x_k,X)$ the vector-valued extended covariance function with the set of hyperparameters~$\varphi_i$. They are defined by
\begin{align}
\begin{split}
&K_{\varphi_l}(X,X)\colon\X^m\times \X^m\to\R^{m\times m}\\
&K_{i,j}= k_{\varphi_l}(X_{1\ldots n, i},X_{1\ldots n, j})\\
&\bm k_{\varphi_l}(\bm x_k,X)\colon\X\times \X^m\to\R^m,\,k_{i} = k_{\varphi_l}(\bm{x}_k,X_{1\ldots n, i})\\
&\forall i,j\in\lbrace 1,\ldots,m\rbrace,l\in\lbrace 1,\ldots,n\rbrace .
\end{split}
\end{align}
A prediction of the~$i$-th component of~$\bm x_{k+1}$ is produced with
\begin{align}
x_{i,k+1} &\sim \mathcal{N} \left(\mu_i(\bm x_{k+1}), \var_i(\bm x_{k+1})\right),\\
\mu_i(\bm x_{k+1}\vert\bm x_{k})&=\bm k_{\varphi_i}(\bm x_k,X)^\top (K_{\varphi_i}(X,X)+I \sigma^2_{n,i})^{-1}\notag\\
& \phantom{{}=}Y_{1\ldots m,i}\label{for:meanvalue}\\
\var_i(\bm x_{k+1}\vert\bm x_{k})&=k_{\varphi_i}(\bm x_k,\bm x_k)-\bm k_{\varphi_i}(\bm x_k,X)^\top \notag\\
& \phantom{{}=}K_{\varphi_i}^{-1}(X,X) \bm k_{\varphi_i}(\bm x_k,X). 
\end{align} 
where~$\mu_i(\cdot)$ is the mean and~$\var_i(\cdot)$ the variance of the random variable. The addition of~$\sigma_{n,i}^2\in\R_+^*,\forall i\in\{1,\ldots,m\}$ allows the algorithm to handle noisy input data. Besides, the numerical stability of the matrix inversion increases. The~$n$ normal distributed components are combined in a multi-variable distribution.
\begin{align}
\bm x_{k+1}&\sim \mathcal{N} \left( \bm \mu(\bm x_{k+1}), \var(\bm x_{k+1})I\right)\\
\bm \mu(\bm x_{k+1}\vert\bm x_{k})&=[\mu_i(\bm x_{k+1}),\ldots,\mu_n(\bm x_{k+1})]^\top\\
\bm \var(\bm x_{k+1}\vert\bm x_{k})&=[\var_i(\bm x_{k+1}),\ldots,\var_n(\bm x_{k+1})]^\top
\end{align} 
\section{Equilibrium points of GP-SSMs}
In this section, we analyze the GP-SSM in terms of the existence of equilibrium points. In the following, we focus on the deterministic point of view. Therefore, just the mean prediction~$\bar{\bm x}_{k+1}=\bm \mu(\bm x_{k+1})$ is taken into account (deterministic GP-SSM). We call the set of equilibrium points of a discrete-time system~$\X^*$ with
\begin{align}
\X^*=\left\lbrace \bm x^*\in\X\mid \bm x^*=\bm f(\bm x^*) \right\rbrace .
\end{align}
The cardinality~$\vert X^* \vert$ is the number of equilibrium points. Each component of the predicted state vector of a deterministic GP-SSM, see (\ref{for:meanvalue}), can be written as weighted sum of covariance functions. The number of covariance functions is equal to the number of training points~$m$.
\begin{align}
\bar x_{i,k+1}&\hspace{-0.1cm}=\hspace{-0.1cm}\sum_{j=1}^m{k_{j,\varphi_i}(\bm x_k,X)\underbrace{[(K_{\varphi_i}(X,X)+I\sigma^2_{n,i})^{-1}Y_{1\ldots m,i}]_j}_{h_j(i)}} \label{for:meanwithh}
\end{align} 
The vector of weighting factors~$\bm h(i)=[h_1(i),\ldots,h_m(i)]^\top$ depends on the inverse of the covariance matrix with signal noise~$(K_{\varphi_i}(X,X)+I\sigma^2_{n,i})^{-1}$, the output training matrix~$Y$ and the required component~$i$.\\
The following gives an overview about the behavior of the different covariance functions presented in Table~\ref{tab:kernel}.
\subsection{Squared exponential covariance function}
The often used squared exponential covariance function~$k(\bm x,\bm x^\prime)=\sigma_f^2 \exp{\left(-\Vert \bm x- \bm x^\prime \Vert^2/(2\lambda^2) \right) }$ is very powerful for nonlinear function regression. The following theorem gives a lower bound of the quantity of equilibrium points.
\begin{myprop}
The set of equilibrium points of deterministic GP-SSMs with squared exponential covariance function has at least one equilibrium point
\begin{align*}
\min\vert X^* \vert=1.
\end{align*}
\end{myprop}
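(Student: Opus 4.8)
The plan is to view the deterministic GP-SSM $\bm x_{k+1}=\bm f(\bm x_k)=\bm\mu(\bm x_k)$ as a continuous self-map of a compact convex set and invoke Brouwer's fixed-point theorem, and then to show the bound is tight by one explicit model.

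The key observation is that the squared exponential kernel is bounded: for every component $i$ and all $\bm x,\bm x'\in\X$, $0<k_{\varphi_i}(\bm x,\bm x')=\sigma_{f,i}^2\exp\!\big(-\|\bm x-\bm x'\|^2/(2\lambda_i^2)\big)\le\sigma_{f,i}^2$. Using representation~(\ref{for:meanwithh}) and the triangle inequality I would first establish, for every $\bm x_k$,
\begin{align*}
|\bar x_{i,k+1}|\ \le\ \sum_{j=1}^m |k_{j,\varphi_i}(\bm x_k,X)|\,|h_j(i)|\ \le\ \sigma_{f,i}^2\sum_{j=1}^m|h_j(i)|\ =:\ c_i .
\end{align*}
Hence $\bm f$ maps $\X=\R^n$ into the compact convex box $B=\prod_{i=1}^n[-c_i,c_i]$ (equivalently into the closed ball of radius $\sqrt{\sum_i c_i^2}$), so in particular $\bm f(B)\subseteq B$.

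Next I would note that $\bm f$ is continuous on $B$: each map $\bm x\mapsto k_{j,\varphi_i}(\bm x,X)$ is a composition of smooth functions and the $h_j(i)$ are fixed scalars, so $\bm f\in\C^0$ (indeed $\C^\infty$). Brouwer's fixed-point theorem applied to $\bm f|_B\colon B\to B$ then yields some $\bm x^*\in B\subseteq\X$ with $\bm f(\bm x^*)=\bm x^*$, i.e. $\bm x^*\in\X^*$ and $|X^*|\ge 1$. (For $n=1$ this also follows directly from the intermediate value theorem applied to $g(x)=f(x)-x$, which is continuous with $g(x)\to\mp\infty$ as $x\to\pm\infty$.) To see that the bound is attained, I would take the training outputs $Y=0$, so that $\bm h(i)=\bm 0$ and $\bm f\equiv\bm 0$ has the origin as its only equilibrium; hence $|X^*|=1$ for this model, and together with the lower bound, $\min|X^*|=1$.

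The main obstacle is making the invariance argument precise, i.e. exhibiting a single compact convex set that $\bm f$ leaves invariant so that a topological fixed-point theorem can be applied; once kernel boundedness is used this is essentially immediate. A secondary point requiring attention is the domain: the argument as written assumes $B\subseteq\X$ (automatic when $\X=\R^n$), and for a proper state space one instead works with the largest coordinate box contained in $\X$ or assumes the model extends well-posedly to $\R^n$.
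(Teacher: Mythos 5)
Your proof is correct, and it takes a genuinely different route from the paper. The paper argues component-wise via Bolzano's theorem: it fixes all coordinates except $x_i$, observes that the weighted sum of squared exponentials decays to zero at infinity so that $f_i^s(x_{i})-x_{i}$ changes sign, and concludes each scalar slice equation has a root. The final step -- passing from ``each $f_i^s(x_i^*)=x_i^*$ has a solution for every choice of the frozen coordinates'' to the existence of a simultaneous solution of the coupled system -- is not actually justified there (the root $x_i^*$ depends on the frozen coordinates, and the paper itself concedes it ``neglects the effects of the multivariate structure''). Your argument closes exactly this gap: boundedness of the kernel gives $\vert\bar x_{i,k+1}\vert\le\sigma_{f,i}^2\sum_j\vert h_j(i)\vert$, hence a compact convex box $B$ with $\bm f(B)\subseteq B$, and continuity plus Brouwer's fixed-point theorem yields a genuine fixed point of the full map. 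This is essentially the invariant-set estimate the paper only introduces later in Theorem~\ref{theo:stab_exp}, repurposed to prove existence; it is the more rigorous of the two arguments and generalizes to any bounded continuous kernel, whereas the paper's IVT argument is cleanly valid only for $n=1$ (where your parenthetical remark recovers it). Your additional construction with $Y=0$ showing the bound $\min\vert X^*\vert=1$ is attained is a small bonus the paper omits. The only caveats are the ones you already flag: the invariance set should be stated once explicitly, and the domain must contain $B$ (automatic for $\X=\R^n$).
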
 
\begin{proof}
The idea of the proof is that each single equation~$x_{i,k}^*=f_i(\bm x_k)$ has a solution for any fixed component~$x_{j,k}$ with~$j\in\{1,\ldots,n\}$ and~$j\neq i$. Therefore, it must exist at least one solution for the overall system of equations.\\
For the proof of the minimum quantity of equilibrium points, we consider (\ref{for:meanwithh}) and insert the squared exponential covariance function
\begin{align}
x_{i,k+1}&=\sum_{j=1}^m{\sigma_{i,f}^2\exp\left(-\frac{\Vert \bm x_k-X_{1\ldots n,j} \Vert^2}{2\lambda_i^2}\right) h_j(i)}.\label{for:expfcn}
\end{align}
The parameters~$\sigma_{i,f}$ and~$\lambda_i$ are the corresponding hyperparameters of the function~$f_i(\cdot)$. As far as the authors know, it is not possible to find an analytic solution for this kind of multivariate equation system. Therefore, the system functions will be treated separately. This kind of view neglects the effects of the multivariate structure but provides also a valid solution. An important property of the squared exponential function is the behavior at the limit:
\begin{align}
\lim_{\Vert \bm x \Vert \to\infty}\sigma_f^2 \exp{\left(-\frac{\Vert \bm x-\bm x^\prime \Vert^2}{2\lambda^2} \right)}&=0,\qquad \text{with }\bm x^\prime\in\R^n
\end{align}
Since the limit of the squared exponential function is zero, the limit of the weighted sum of squared exponential functions must be also zero.
\begin{align}
\lim_{\Vert \bm x \Vert \to\infty}\sum_{j=1}^m{\sigma_{i,f}^2\exp\left(-\frac{\Vert \bm x-X_{1\ldots n,j} \Vert^2}{2\lambda_i^2}\right) h_j(i)}=0
\end{align}
We recall Bolzano's theorem which is a special case of the intermediate value theorem.
\begin{mytheo}[Bolzano, \cite{larson2013calculus}]
Suppose~$f(x):[a,b]\to \R$ is continuous on the closed interval~$[a,b]$ and suppose that~$f(a)$ and~$f(b)$ have opposite signs. Then there exists a number~$c$ in the interval~$[a,b]$ for which~$f(c)=0$.
\end{mytheo}

Since Bolzano's theorem just holds for scalar functions, (\ref{for:expfcn}) must be rewritten as function of a scalar variable. For this purpose, the components~$x_j^*$ with~$j\in\{1,\ldots,n\}\vert j\neq i$ are fixed. The resulting function is called~$\bm f^s(x_i)\colon \R\to\R^n$.
\begin{align}
f_i^s(x_{i,k})&:=f_i([x_{1,k},\ldots,x_{i-1,k},x_{i,k},x_{i+1,k},\ldots,{x_n,k}])\notag\\
&\text{  with fixed } x_{1,k},\ldots,x_{i-1,k},x_{i+1,k},\ldots,x_{n,k}\in\R 
\end{align}
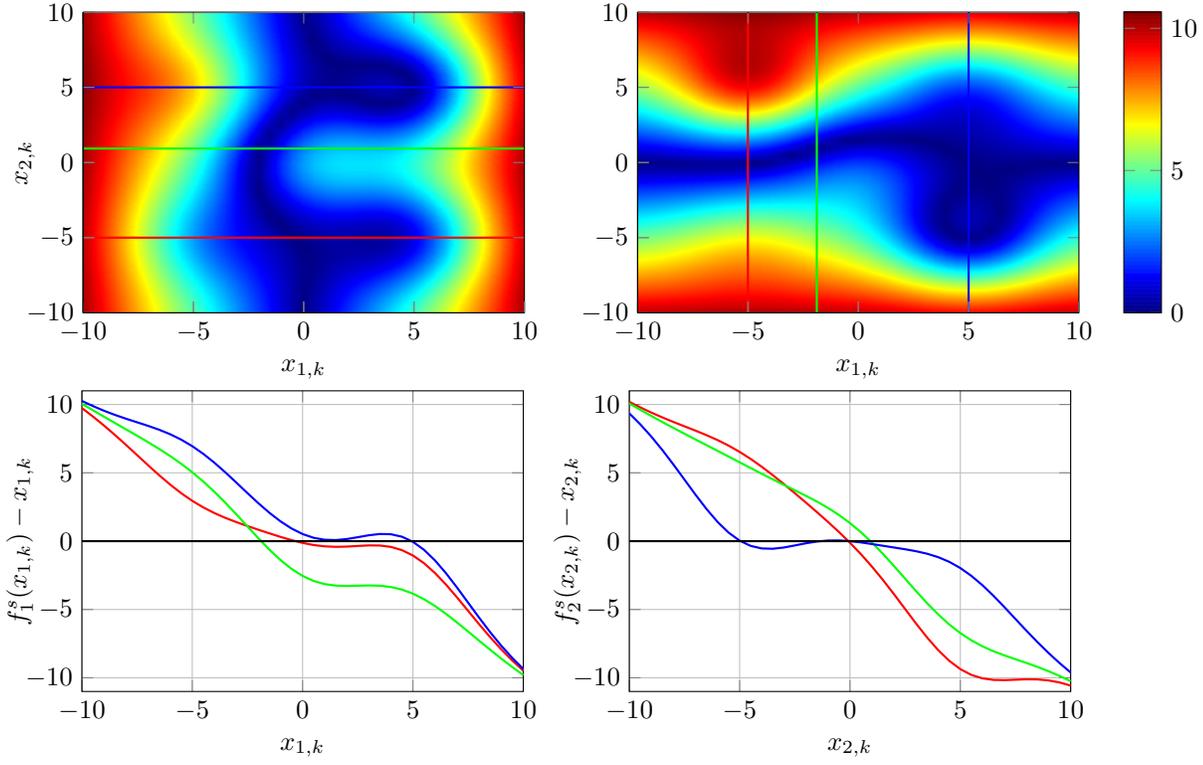
\begin{figure*}[thpb]
\begin{tikzpicture}
    \begin{axis}[width=0.33\textwidth,
    				height=4cm,
    				view={0}{90}, 
    				colormap/jet, 
    				shader=interp,
    				xlabel={$x_{1,k}$},
    				ylabel={$x_{2,k}$},
    				ylabel shift=-10pt,
    				scale only axis]
    \addplot3[surf,mesh/cols=41] table {data/data1.dat};
    \addplot[blue,thick] coordinates {(-10,5) (10,5)};
    \addplot[green,thick] coordinates {(-10,0.93) (10,0.93)};
    \addplot[red,thick] coordinates {(-10,-5) (10,-5)};
    \end{axis}
\end{tikzpicture} 
\hspace*{0.1cm}
\begin{tikzpicture}    
    \begin{axis}[width=0.33\textwidth,
    				height=4cm,
    				view={0}{90}, 
    				colormap/jet, 
    				shader=interp,
    				xlabel={$x_{1,k}$},
    				ylabel shift=-10pt,
    				colorbar,
    				scale only axis]
    \addplot3[surf,mesh/cols=41] table {data/data2.dat};
    \addplot[blue,thick] coordinates {(5,-10) (5,10)};
    \addplot[green,thick] coordinates {(-1.88,-10) (-1.88,10)};
    \addplot[red,thick] coordinates {(-5,-10) (-5,10)};
    \end{axis}
\end{tikzpicture}\\ 
\hspace*{-0.25cm}
\begin{tikzpicture}
    \begin{axis}[width=0.33\textwidth,
    				height=4cm,
    				xlabel={$x_{1,k}$},
    				ylabel={$f_1^s(x_{1,k})-x_{1,k}$},
    				xmin = -10,
    				xmax = 10,
    				ymin = -11,
    				ymax = 11,
    				grid = major,
    				ylabel shift=-10pt,
    				scale only axis]
    \addplot[red,thick] table[y index = 1] {data/data3.dat};
    \addplot[green,thick] table[y index = 2] {data/data3.dat};
    \addplot[blue,thick] table[y index = 3] {data/data3.dat};
    \addplot[black,thick] coordinates {(-10,0) (10,0)};
    \end{axis}
\end{tikzpicture}
\hspace*{-0.25cm}
\begin{tikzpicture}
    \begin{axis}[width=0.33\textwidth,
    				height=4cm,
    				xlabel={$x_{2,k}$},
    				ylabel={$f_2^s(x_{2,k})-x_{2,k}$},
    				xmin = -10,
    				xmax = 10,
    				ymin = -11,
    				ymax = 11,
    				grid = major,
    				ylabel shift=-10pt,
    				scale only axis]
    \addplot[red,thick] table[y index = 1] {data/data4.dat};
    \addplot[green,thick] table[y index = 2] {data/data4.dat};
    \addplot[blue,thick] table[y index = 3] {data/data4.dat};
    \addplot[black,thick] coordinates {(-10,0) (10,0)};
    \end{axis}
\end{tikzpicture} 
\caption{Top: Shows color-coded~$\vert f_1(\bm x_k)-x_{1,k}\vert$ on the left and~$\vert f_2(\bm x_k)-x_{2,k}\vert$ on the right against~$x_{1,k}$ and~$x_{2,k}$. Dark blue marks the area with possible equilibrium points. Bottom: On the left side,~$x_{2,k}$ is fixed by three example values~$-5$ (red),~$5$ (blue) and~$0.93$ (green). On the right side,~$x_{1,k}$ is fixed by three example values~$-5$ (red),~$5$ (blue) and~$-1.88$ (green). As Bolzano's theorem predicts, each function has at least one zero crossing. \label{fig:sdexp}}
\end{figure*}
Due to the fact that~$f_i^s(\cdot)$ is continuous and the limits
\begin{align}
\lim_{x_{i,k} \to\infty} f_i^s(x_{i,k})-x_{i,k}=-\infty\\
\lim_{x_{i,k} \to-\infty} f_i^s(x_{i,k})-x_{i,k}=\infty
\end{align}
have a different sign, Bolzano's theorem predicts at least one solution~$x_i^*$ for~$f_i^s(x_i^*)=x_i^*$. Since this holds for any~$x_{1,k},\ldots,x_{i-1,k},x_{i+1,k},\ldots,x_{n,k}\in\R$, each function~$f_i^s(x_i^*)$ has such a solution. Therefore, there must exist an equilibrium point~$\bm x^*$ which fulfils~$\bm x^*=\bm f(\bm x^*)$.
\end{proof}
Figure \ref{fig:sdexp} demonstrates the idea of the proof. For an example system with two states, the top row shows color-coded on the left side the difference between~$f_1(\bm x_k)$ and~$x_{1,k}$ and on the right side the difference between~$f_2(\bm x_k)$ and~$x_{2,k}$. If the distance is zero, which is illustrated by dark color, the component of the state vector~$x_{i,k}$ equals~$f_i(\bm x_k)$. The second row shows the slice plane~$f_i^s(x_{i,k})-x_{i,k}$ which should be zero for an equilibrium. On the left side,~$x_{2,k}$ is fixed by three example values~$-5$ (red),~$5$ (blue) and~$0.93$ (green). On the right side,~$x_{1,k}$ is fixed by three example values~$-5$ (red),~$5$ (blue) and~$-1.88$ (green). As Bolzano's theorem predicts, each function has at least one zero crossing. Therefore, it is possible to find two values~$x^*_1$ and~$x^*_2$ which fulfill~$f_i^s(x_i^*)-x_i^*=0$ for each~$i\in\{1,2\}$. For this example system a numerical solver determinates one equilibrium point at~$\bm x^*=[-1.88,0.93]^\top$. The green function illustrates this value. On the left side, the function crosses zero at~$0.93$ and on the right side zero is crossed at~$-1.88$.
\subsection{Linear covariance function}
The next analysis is about the equilibrium points of the linear covariance function~$k(\bm x,\bm x^\prime)=\bm x^\top \bm x^\prime+ \sigma_0^2$. 
\begin{myprop}
The set of equilibrium points of deterministic GP-SSMs with linear covariance function has the following properties:
\begin{align*}
\vert X^* \vert =0 \;\vee\; \vert X^* \vert =1 \;\vee\; \vert X^* \vert =\infty
\end{align*}
\end{myprop}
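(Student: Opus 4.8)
The plan is to exploit the fact that, for the linear covariance function, the one-step mean map of the deterministic GP-SSM is \emph{affine} in the state, so that the equilibrium condition collapses to a single linear algebraic system, whose solution set can have cardinality only $0$, $1$, or $\infty$.

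First I would substitute $k_{\varphi_i}(\bm x,\bm x^\prime)=\bm x^\top\bm x^\prime+\sigma_{0,i}^2$ into the weighted-sum representation (\ref{for:meanwithh}). For the $i$-th component this yields
\begin{align*}
\bar x_{i,k+1}&=\sum_{j=1}^m\bigl(X_{1\ldots n,j}^\top\bm x_k+\sigma_{0,i}^2\bigr)h_j(i)\\
&=\Bigl(\sum_{j=1}^m h_j(i)\,X_{1\ldots n,j}\Bigr)^{\!\top}\bm x_k+\sigma_{0,i}^2\sum_{j=1}^m h_j(i),
\end{align*}
so that $\bar{\bm x}_{k+1}=A\bm x_k+\bm b$, where the $i$-th row of $A\in\R^{n\times n}$ is $(X\bm h(i))^\top$ and $b_i=\sigma_{0,i}^2\sum_{j=1}^m h_j(i)$. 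Since the training data (hence $\bm h(i)$, $X$ and the hyperparameters) are fixed, $A$ and $\bm b$ are constant.

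Next I would rewrite the equilibrium condition $\bm x^*=\bm f(\bm x^*)$ — where $\bm f$ denotes the mean map — as the linear system $(I-A)\bm x^*=\bm b$, and invoke the Rouché--Capelli alternative (elementary linear algebra over $\R^n$). There are exactly three mutually exclusive cases: (i) if $\rank(I-A)=n$, equivalently $1\notin\eig(A)$, the system has the unique solution $\bm x^*=(I-A)^{-1}\bm b$, so $\vert X^*\vert=1$; (ii) if $\rank(I-A)<n$ but $\bm b\in\operatorname{range}(I-A)$, equivalently $\rank\bigl([\,I-A\mid\bm b\,]\bigr)=\rank(I-A)<n$, the solution set is a translate of $\kernel(I-A)$, an affine subspace of dimension $n-\rank(I-A)\ge 1$, hence $\vert X^*\vert=\infty$; (iii) if $\bm b\notin\operatorname{range}(I-A)$ the system is inconsistent and $\vert X^*\vert=0$. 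Since these cases are exhaustive, no other value of $\vert X^*\vert$ occurs, which is exactly the claim.

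There is essentially no hard step once the affine structure is recognized; the result is the classical trichotomy for solution sets of linear systems. The one point requiring a word of care is the index set: the statement is clean when $\X=\R^n$ (as is implicitly assumed throughout Section~III, cf. the $\Vert\bm x\Vert\to\infty$ arguments used for the squared exponential case). If $\X$ were a proper subset, the positive-dimensional affine set of case (ii) would have to be intersected with $\X$, which I would resolve by taking $\X=\R^n$, or at least convex with nonempty interior, so that the intersection is still empty or infinite. For completeness I would also give a one-dimensional example with suitable training data realizing each of $A=1,\ b\neq 0$ (no equilibrium), $A=1,\ b=0$ (a continuum of equilibria), and $\vert A\vert\neq 1$ (a unique equilibrium), to show that all three cardinalities genuinely arise.
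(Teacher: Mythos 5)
Your proposal is correct and follows essentially the same route as the paper: substitute the linear kernel into the weighted-sum representation to obtain the affine map $\bar{\bm x}_{k+1}=A\bm x_k+\bm b$, then apply the rank/consistency trichotomy for the linear system $(I-A)\bm x^*=\bm b$. Your added remarks on the index set $\X$ and on exhibiting all three cardinalities are sensible refinements but do not change the argument.
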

\begin{proof}
We start with (\ref{for:meanwithh}) and use the linear covariance function.
\begin{align}
\bar x_{i,k+1}&=\sum_{j=1}^m{k_{j,\varphi_i}(\bm x_k,X)h_j(i)}\\
&=\sum_{j=1}^m{(\bm x_k^\top X_{1\ldots n,j} +\sigma_{i,0}^2)h_j(i)}\\
&=\sum_{j=1}^m{\bm x_k^\top X_{1\ldots n,j} h_j(i)}+\sigma_{i,0}^2 \bm 1^\top \bm h(i)
\end{align}
Since the sum of linear functions is also a linear function, the whole one step ahead state vector~$\bar{\bm x}_{k+1}$ is denoted by
\begin{align}
\bar{\bm x}_{k+1}&=\underbrace{\begin{pmatrix}
X_{1,1\dots m}\bm h(1),\ldots,X_{n,1\dots m}\bm h(1)\\
X_{1,1\dots m}\bm h(2),\ldots,X_{n,1\dots m}\bm h(2)\\
\vdots\\
X_{1,1\dots m}\bm h(n),\ldots,X_{n,1\dots m}\bm h(n)\\
\end{pmatrix}}_{A}\bm x_k \notag\\
&+
\underbrace{\begin{pmatrix}
\sigma_{1,0}^2 \bm 1^\top \bm h(1)\\
\sigma_{2,0}^2 \bm 1^\top \bm h(2)\\
\vdots\\
\sigma_{n,0}^2 \bm 1^\top \bm h(n)
\end{pmatrix}}_{\bm b}
\label{for:stab:lin}
\end{align}
and is written as non-homogeneous linear system with state matrix~$A\in \R^{n\times n}$ and offset~$b\in \R^{n}$. The equilibrium points are calculated by solving the equation~$\bm x^*=A \bm x^*+\bm b$ with~$A^I=I-A$ and the Moore-Penrose pseudoinverse matrix~$(A^I)^+$ the set of equations may behave in any one of three possible ways:
\begin{itemize}
\item[(i)] The system has a single unique solution if~$\rank(A^I)=\rank(A^I|\bm b)=n$ $\Rightarrow$ $\X^*=\{(A^I)^{-1}\bm b\}$
\item[(ii)] The system has infinitely many solutions if~$\rank(A^I)=\rank(A^I|\bm b)<n$ $\Rightarrow$ $\X^*=\{(A^I)^{+}\bm b+\kernel(A)\}$
\item[(iii)] The system has no solution if~$\rank(A^I)\neq\rank(A^I|\bm b)$ $\Rightarrow$ $\X^*=\{\varnothing\}$
\end{itemize}
\end{proof}
Due to the fact that the presented conditions (ii) and (iii) are very unlikely, a system with infinitely many solutions or no solution is in practice as good as impossible. For example, if we assume a one dimensional system, $A$ must be exactly~$1$ to obtain infinitely many solutions (if $b=0$) or no solution (if $b\neq 0$).
\subsection{Polynomial covariance function}
The second studied function is the polynomial covariance function~$k(\bm x,\bm x^\prime)=\left(\bm x^\top \bm x^\prime+ \sigma_0^2 \right) ^p$ which is more flexible and allows nonlinear function estimation. The degree~$p$ is important for the quantity of equilibrium points as the next theorem shows.
\begin{myprop}
The set of equilibrium points of deterministic GP-SSMs with polynomial covariance function has the following properties
\begin{align*}
\max\vert X^* \vert =\prod_{i=1}^n {p_i}
\end{align*}
where~$p_i$ is the degree of the corresponding covariance function to the~$i$-th component of~$\bar{ \bm x}_{k+1}$.
\end{myprop}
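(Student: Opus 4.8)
The plan is to exploit that, for the polynomial covariance function, each component of the deterministic one step ahead map is a genuine multivariate polynomial, so that the fixed point equation becomes a \emph{square} polynomial system whose number of isolated solutions is controlled by B\'ezout's theorem. Inserting $k(\bm x,\bm x^\prime)=(\bm x^\top\bm x^\prime+\sigma_{i,0}^2)^{p_i}$ into (\ref{for:meanwithh}) gives
\begin{align}
\bar x_{i,k+1}=\sum_{j=1}^m\left(\bm x_k^\top X_{1\ldots n,j}+\sigma_{i,0}^2\right)^{p_i} h_j(i),
\end{align}
which is a polynomial of total degree at most $p_i$ in the entries of $\bm x_k$. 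Hence every equilibrium satisfies the square system $g_i(\bm x^*)=0$, $i=1,\dots,n$, where $g_i(\bm x):=\sum_{j=1}^m(\bm x^\top X_{1\ldots n,j}+\sigma_{i,0}^2)^{p_i}h_j(i)-x_i$ has degree at most $p_i$ (the subtracted linear term cannot raise the degree, since $p_i\ge 2$).

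For the upper bound I would invoke the B\'ezout bound: a system of $n$ polynomial equations in $n$ unknowns with degrees $d_1,\dots,d_n$ has at most $\prod_{i=1}^n d_i$ isolated common zeros in $\R^n$ (in fact in $\C^n$). This immediately yields $\vert\X^*\vert\le\prod_{i=1}^n p_i$ whenever $\X^*$ is finite; the degenerate case of a positive dimensional solution set ($\vert\X^*\vert=\infty$), which occurs e.g. when some $g_i\equiv 0$, is the exact analogue of case (ii) for the linear kernel and is excluded from the finite count, just as there.

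It remains to show the bound is attained, which is the more delicate direction. I would construct a GP-SSM whose mean map is \emph{decoupled}, $\bar x_{i,k+1}=q_i(x_{i,k})$ with $q_i$ univariate of degree exactly $p_i$ chosen so that $q_i(t)-t$ has $p_i$ distinct real roots, e.g.\ $q_i(t)=t+\varepsilon_i\prod_{r=1}^{p_i}(t-r)$ with $\varepsilon_i\neq 0$ (here $p_i\ge 2$ guarantees the linear term does not cancel the leading term). The key point is that such an $f_i$ is realisable inside the model: for fixed $\sigma_{i,0}^2>0$ the functions $\bm x\mapsto(\bm a^\top\bm x+\sigma_{i,0}^2)^{p_i}$, as $\bm a$ ranges over sufficiently many training inputs in general position, span the whole space of polynomials of degree $\le p_i$ in $n$ variables; this follows from a coefficient-matching/Vandermonde argument, since a dependence $\sum_j\beta_j(\bm a_j^\top\bm x+\sigma_{i,0}^2)^{p_i}\equiv 0$ forces $\sum_j\beta_j\bm a_j^\alpha=0$ for every multi-index with $\vert\alpha\vert\le p_i$. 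Moreover the weights $\bm h(i)=(K_{\varphi_i}(X,X)+I\sigma_{n,i}^2)^{-1}Y_{1\ldots m,i}$ can be made equal to any prescribed vector through the choice of the training outputs $Y_{1\ldots m,i}$, because $K_{\varphi_i}(X,X)+I\sigma_{n,i}^2$ is invertible. For the resulting GP-SSM the equilibria are exactly the grid $\{1,\dots,p_1\}\times\cdots\times\{1,\dots,p_n\}$, so $\vert\X^*\vert=\prod_{i=1}^n p_i$, matching the B\'ezout bound.

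The step I expect to be the main obstacle is the tightness direction: one has to ensure that the polynomial kernel is flexible enough to realise a map attaining the maximal number of \emph{real} (not merely complex) equilibria, and that this is achievable within the GP-SSM structure with a legitimate training set and the induced weights. The spanning property of powers of affine forms is what makes this work; the degree bookkeeping and the appeal to B\'ezout are then routine.
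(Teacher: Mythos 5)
Your upper bound argument is exactly the paper's: expand $(\bm x_k^\top X_{1\ldots n,j}+\sigma_{i,0}^2)^{p_i}$ to see that each component of the mean map is a polynomial of degree at most $p_i$, absorb the $-x_i$ term, and apply B\'ezout's theorem to the resulting square system, noting that real isolated zeros are a subset of the complex ones. Where you genuinely go beyond the paper is the attainment direction. The paper's proof stops at ``does not exceed $\prod_i p_i$'' and only remarks informally that Bernstein's theorem gives no tighter bound because the polynomials are complete; it never exhibits a GP-SSM that realises the bound, so strictly speaking it proves $\max\vert\X^*\vert\le\prod_i p_i$ rather than equality. Your construction closes that gap: the observation that the affine powers $\bm x\mapsto(\bm a^\top\bm x+\sigma_{i,0}^2)^{p_i}$ with $\sigma_{i,0}^2>0$ and inputs in general position span all polynomials of degree $\le p_i$ (the Veronese/coefficient-matching argument), combined with the fact that $\bm h(i)=(K_{\varphi_i}(X,X)+I\sigma_{n,i}^2)^{-1}Y_{1\ldots m,i}$ can be steered to any prescribed vector by choosing the training outputs, shows that a decoupled map $q_i(t)=t+\varepsilon_i\prod_{r=1}^{p_i}(t-r)$ is realisable within the model, and this map has exactly $\prod_i p_i$ real equilibria. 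One small point worth making explicit: the same input matrix $X$ is shared by all $n$ component GPs, so you should take $m\ge\max_i\binom{n+p_i}{p_i}$ inputs in general position so that the spanning property holds simultaneously for every degree $p_i$; with that noted, your argument is complete and strictly stronger than the paper's.
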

\begin{proof}
We use again (\ref{for:meanwithh}) and insert the polynomial covariance function
\begin{align}
\bar x_{i,k+1}&=\sum_{j=1}^m{k_{j,\varphi_i}(\bm x_k,X)h_j(i)}\\
&=\sum_{j=1}^m{(\bm x_k^\top X_{1\ldots n,j} +\sigma_{i,0}^2)^{p_i} h_j(i)}\label{for:polfcn}
\end{align}
where~$\bm p=[p_1,\ldots,p_n]^\top\in\N^n$ contains the degree of each covariance function. With the multinomial theorem and the condition for equilibrium points~$\bm x^*=\bm f(\bm x^*)$, equation (\ref{for:polfcn}) can be written as
\begin{align}
x^*_i&=\sum_{l_1+\ldots+l_{n+1}=p_i}{\alpha_{i,l_1,\ldots,l_{n+1}} x_1^{*^{l_1}} x_2^{*^{l_2}} \cdots x_n^{*^{l_n}}\sigma_{i,0}^{2l_{n+1}}}
\end{align}
with~$0\leq l_1,\ldots,l_{n+1} \leq n$ and~$\alpha_{l_1,\ldots,l_{n+1}}\in\R$. The term of the left-hand side can be integrate in the right-hand side by adapting the coefficients~$\alpha_{l_1,\ldots,l_{n+1}}$ to~$\beta_{i,l_1,\ldots,l_{n+1}}\in\R$.
 \begin{align}
0&=\sum_{l_1+\ldots+l_{n+1}=p_i}{\beta_{i,l_1,\ldots,l_{n+1}} x_1^{*^{l_1}} x_2^{*^{l_2}} \cdots x_n^{*^{l_n}}\sigma_{i,0}^{2l_{n+1}}}\label{for:poly}
\end{align}
The theorem of B\'ezout gives an upper bound for the number of roots for this polynomial system.
\begin{mytheo}[B\'ezout, \cite{sturmfels1998polynomial}]
Unless a square polynomial system denoted by~$\bm f(\bm x)$ with degree~$d_i$ of each polynomial function~$f_i(\bm x)$
\begin{align*}
f_1(x_1,x_2,\ldots,x_n)&=0\\
f_2(x_1,x_2,\ldots,x_n)&=0\\
&\vdots\\
f_n(x_1,x_2,\ldots,x_n)&=0
\end{align*}
has an infinite number of zeros, the number of its isolated zeros in~$\C^n$, counting multiplicities, does not exceed the number~$d=d_1 d_2\cdots d_n$. 
\end{mytheo}

Due to the fact that the real numbers are a subset of the complex numbers, the resulting number of zeros in~$\R^n$ is less or equal than the number given by B\'ezout's theorem. For incomplete polynomials Bernstein's theorem allows to calculate a tighter bound for the number of zeros. Since the generated polynomial functions by (\ref{for:poly}) are complete, Bernstein's theorem does not provide a closer boundary.
\end{proof}
\section{Stability}
In this section we analyze the stability of the calculated equilibrium points of deterministic GP-SSMs. For each presented covariance function the related stability condition can be found in the following listing.
\begin{mytheo}[Stability of GP-SSMs with squared exponential covariance function]
A deterministic GP-SSM with squared exponential covariance function and $m$ training points has the following properties:
\begin{itemize}
\item[(i)] There exists an invariant set
\begin{align*}
\Lambda=\left\lbrace \bm x\in\R^n \mid \vert x_i \vert \leq \sigma_{i,f}^2 \sqrt{m} \Vert \bm h(i)\Vert,\forall i=1,\ldots,n  \right\rbrace
\end{align*}
which is also globally attractive.
\item[(ii)] The solution is globally uniformly ultimately bounded with bound $b=\sqrt{m}\left\Vert \left[  \sigma_{1,f}^2 \Vert \bm h(1) \Vert,\ldots, \sigma_{n,f}^2 \Vert \bm h(n) \Vert \right] \right\Vert$.
\end{itemize}
\label{theo:stab_exp}
\end{mytheo}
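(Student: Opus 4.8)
The plan is to exploit the explicit form~(\ref{for:expfcn}) of the deterministic one-step map together with the uniform boundedness of the squared exponential kernel. First I would observe that each predicted component is the inner product $\bar x_{i,k+1}=\bm k_{\varphi_i}(\bm x_k,X)^\top\bm h(i)$, and that every entry of $\bm k_{\varphi_i}(\bm x_k,X)$ obeys $0<\sigma_{i,f}^2\exp(-\Vert\bm x_k-X_{1\ldots n,j}\Vert^2/(2\lambda_i^2))\leq\sigma_{i,f}^2$, since the exponential factor never exceeds one. Hence $\Vert\bm k_{\varphi_i}(\bm x_k,X)\Vert\leq\sigma_{i,f}^2\sqrt m$, and this estimate is uniform in $\bm x_k\in\R^n$: it does not presuppose that $\bm x_k$ lies in any particular region of the state space.

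Second, applying the Cauchy--Schwarz inequality to the inner product gives $\vert\bar x_{i,k+1}\vert\leq\Vert\bm k_{\varphi_i}(\bm x_k,X)\Vert\,\Vert\bm h(i)\Vert\leq\sigma_{i,f}^2\sqrt m\,\Vert\bm h(i)\Vert$ for every $i\in\{1,\ldots,n\}$ and every $\bm x_k$. This is exactly the inequality defining $\Lambda$, so $\bm f(\bm x_k)\in\Lambda$ for arbitrary $\bm x_k\in\R^n$. In particular $\bm f(\Lambda)\subseteq\Lambda$, which proves invariance, and since an arbitrary initial state is already mapped into $\Lambda$ after a single iteration, $\Lambda$ is globally attractive (indeed reached in one step). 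This settles item~(i).

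Third, for item~(ii) I would pass from the componentwise bound to the Euclidean norm: $\Vert\bm x_{k+1}\Vert^2=\sum_{i=1}^n\vert\bar x_{i,k+1}\vert^2\leq m\sum_{i=1}^n(\sigma_{i,f}^2\Vert\bm h(i)\Vert)^2$, whence $\Vert\bm x_{k+1}\Vert\leq\sqrt m\,\Vert[\sigma_{1,f}^2\Vert\bm h(1)\Vert,\ldots,\sigma_{n,f}^2\Vert\bm h(n)\Vert]\Vert=b$ for every $k\geq 0$. To conclude global uniform ultimate boundedness one simply takes, in its definition, the ultimate bound $b$ and the time constant $T(a)=1$ for every $a>0$; in fact $b$ is attained after one step, uniformly over all initial conditions.

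I do not anticipate a genuine obstacle here: the whole argument reduces to the uniform bound on the kernel entries followed by one application of Cauchy--Schwarz. The only points that need a little care are (a) recalling that in the deterministic GP-SSM only the mean map is iterated, so that the uniform componentwise kernel bound is all that is required, and (b) making the norm on $\R^n$ used for ultimate boundedness consistent with the $\ell^2$ form appearing in $b$, so that the componentwise description in~(i) and the Euclidean description in~(ii) agree.
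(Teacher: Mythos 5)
Your proposal is correct and follows essentially the same route as the paper's own proof: the uniform bound $0<k_{\varphi_i}(\bm x_k,X_{1\ldots n,j})\leq\sigma_{i,f}^2$ on the kernel entries, Cauchy--Schwarz applied to $\bar x_{i,k+1}=\bm k_{\varphi_i}(\bm x_k,X)^\top\bm h(i)$ to obtain the componentwise bound defining $\Lambda$, the observation that $\Lambda$ is entered after a single step from any initial condition (hence invariant and globally attractive), and the passage to the Euclidean norm for the ultimate bound $b$. Your write-up is in fact slightly more explicit than the paper's in spelling out the final norm computation and the choice $T(a)=1$ in the definition of uniform ultimate boundedness.
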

\begin{proof}
The proof starts with presenting some properties of the smooth covariance function~$k_{\varphi_i}(\bm x,\bm x^\prime)$. For all~$\sigma_f\in\R_+$ and~$\lambda\in\R^*$ the function is bounded with
\begin{align}
\sup_{\bm x,\bm x^\prime\in\R^n}\hspace{-0.2cm} k_{\varphi_i}(\bm x,\bm x^\prime)&=\left. \sigma_{i,f}^2 \exp{\left(-\frac{\Vert \bm x-\bm x^\prime \Vert^2}{2\lambda^2} \right)}\right|_{\bm x=\bm x^\prime}\hspace{-0.2cm}=\sigma_{i,f}^2\\
\inf_{\bm x,\bm x^\prime\in\R^n}\hspace{-0.2cm}k_{\varphi_i}(\bm x,\bm x^\prime)&=\hspace{-0.3cm}\lim_{\Vert \bm x-\bm x^\prime\Vert\to\infty}\hspace{-0.3cm}\sigma_{i,f}^2 \exp{\left(-\frac{\Vert \bm x-\bm x^\prime \Vert^2}{2\lambda^2} \right)}=0.
\end{align}
According the Cauchy-Schwarz inequality and the results above the following holds:
\begin{align}
\vert\bar x_{i,k+1}\vert&=\vert\bm k_{\varphi_i}(\bm x_k,X)^\top \bm h(i)\vert\leq \sigma_{i,f}^2 \sqrt{m}\Vert \bm h(i)\Vert\label{for:stab:sq}
\end{align}
Therefore, the invariant set~$\Lambda$ is a neighbourhood of zero which is determined by
\begin{align}
\Lambda=\left\lbrace \bm x\in\R^n \mid \vert x_i \vert \leq \sigma_{i,f}^2 \sqrt{m}\Vert \bm h(i)\Vert,\forall i=1,\ldots,n  \right\rbrace
\end{align}
Furthermore, we want to show that the set~$\Lambda$ is attractive, (i). Since (\ref{for:stab:sq}) shows that for any~$\bm x_0\in\X$ the absolute value of the next step state vector~$\bm x_{1}$ is equal or less~$\sigma_{i,f}^2\sqrt{m}\Vert \bm h(i)\Vert$, the state~$\bm x_k$ approaches~$\Lambda$ for~$k\geq 1$. This guarantees globally uniformly ultimately boundedness, (ii), with ultimate bound
\begin{align}
b=\sqrt{m}\left\Vert \left[  \sigma_{1,f}^2 \Vert \bm h(1) \Vert,\ldots, \sigma_{n,f}^2 \Vert \bm h(n) \Vert \right] \right\Vert
\end{align}
\label{proof:invariant_set} 
\end{proof}
An important consequence of Theorem~\ref{theo:stab_exp} is that it is not possible to learn unbounded system trajectories with the GP-SSM which are based on the squared exponential covariance function.
\begin{mytheo}[Stability of GP-SSMs with linear covariance function]
A deterministic GP-SSM with linear covariance function is stable if the spectrum of the state matrix
\begin{align*}
A=\begin{pmatrix}
X_{1,1\dots m}\bm h(1),\ldots,X_{n,1\dots m}\bm h(1)\\
X_{1,1\dots m}\bm h(2),\ldots,X_{n,1\dots m}\bm h(2)\\
\vdots\\
X_{1,1\dots m}\bm h(n),\ldots,X_{n,1\dots m}\bm h(n)\\
\end{pmatrix}
\end{align*}
is equal or less one. If the magnitude is strictly less then one, i.e. ~$\vert \sigma(A)\vert<1$, than the equilibrium point is asymptotically stable. Otherwise, the system is unstable.
\end{mytheo}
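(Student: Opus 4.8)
The plan is to reduce the statement to the classical stability theory of discrete-time affine systems, exploiting the fact that the equilibrium analysis of the linear covariance function has already shown the deterministic GP-SSM to be \emph{exactly} such a system. First I would invoke equation~(\ref{for:stab:lin}): with the linear kernel the one-step mean map is $\bar{\bm x}_{k+1} = A\bm x_k + \bm b$, with $A$ the matrix in the theorem statement and $\bm b$ the offset vector collected there. Assuming a fixed point $\bm x^*\in\X^*$ exists --- which, by the preceding proposition on equilibria of the linear GP-SSM, is the generic situation (cases (i) and (ii)) --- I would translate coordinates via $\bm z_k := \bm x_k - \bm x^*$. Since $\bm x^* = A\bm x^* + \bm b$, the affine term cancels and the error dynamics become the homogeneous linear recursion $\bm z_{k+1} = A\bm z_k$, hence $\bm z_k = A^k \bm z_0$.

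Second, I would analyze $A^k$ through the Jordan canonical form $A = T J T^{-1}$, so that $A^k = T J^k T^{-1}$ and each Jordan block of eigenvalue $\lambda$ contributes entries behaving like $\binom{k}{r}\lambda^{k-r}$. From this the three regimes follow directly: if every eigenvalue satisfies $\vert\lambda\vert < 1$, then $\Vert A^k\Vert\to 0$ geometrically, so $\bm z_k\to\bm 0$ for any initial condition and $\bm x^*$ is globally asymptotically --- indeed exponentially --- stable; if $\vert\sigma(A)\vert \le 1$ with the unit-modulus eigenvalues semisimple, then $\sup_k\Vert A^k\Vert < \infty$, which yields Lyapunov stability of $\bm x^*$ since trajectories remain bounded; and if some $\vert\lambda\vert > 1$ (or a nontrivial Jordan block sits on the unit circle), the associated mode of $\bm z_k$ grows without bound, so the system is unstable. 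If one prefers a genuine Lyapunov-function argument for the asymptotically stable case, one can instead solve the discrete Lyapunov equation $A^\top P A - P = -Q$ for $P\succ 0$ (solvable precisely when $\rho(A)<1$) and take $V(\bm z)=\bm z^\top P\bm z$, which decreases strictly along trajectories.

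The main obstacle --- really the only subtlety --- is bookkeeping around two degenerate situations. The first is the existence of the equilibrium itself: if $\rank(A^I)\neq\rank(A^I\vert\bm b)$, i.e. case (iii) of the equilibrium proposition, there is no fixed point and the conclusion should be restated as boundedness of trajectories, which still follows from $\sup_k\Vert A^k\Vert<\infty$ together with boundedness of the forced response $\sum_{j=0}^{k-1}A^j\bm b$. The second is the marginal spectral case $\vert\sigma(A)\vert = 1$, where the outcome depends on the Jordan structure and not on eigenvalue magnitudes alone, so the phrase ``stable if the spectrum is equal or less one'' must be understood as additionally requiring the eigenvalues on the unit circle to be non-defective. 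Apart from these two caveats, the proof is a direct transcription of standard linear-systems results applied to the affine form~(\ref{for:stab:lin}).
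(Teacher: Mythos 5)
Your proposal is correct and follows essentially the same route as the paper, whose entire proof is the one-line observation that the dynamics~(\ref{for:stab:lin}) are affine so that standard linear stability theory applies directly; you simply carry out that application in full (coordinate shift to the equilibrium, Jordan-form analysis of $A^k$). Your two caveats --- that unit-modulus eigenvalues must be semisimple for marginal stability, and that case (iii) with no fixed point requires restating the conclusion as boundedness --- are genuine refinements the paper's terse proof and theorem statement gloss over.
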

\begin{proof}
Since the system dynamic of a GP-SSM with linear covariance function is a linear function, see~(\ref{for:stab:lin}), the theorem about linear stability can be directly applied. 
\end{proof}
\begin{mytheo}[Stability of GP-SSMs with polynomial covariance function]
A deterministic GP-SSM with polynomial covariance function is (locally) stable in~$\bm{x^*}$ if the spectrum of the matrix
 \begin{align}
A=\frac{\partial}{\partial \bm{x_k}} \hspace{-1.1cm}\left.\sum_{\hphantom{aaaaaaa} l_1+\ldots+l_{n+1}=p_i}\hspace{-1.1cm}{\alpha_{i,l_1,\ldots,l_{n+1}} x_{1,k}^{l_1} x_{2,k}^{l_2} \cdots x_{n,k}^{l_n}\sigma_{i,0}^{2l_{n+1}}}\right|_{\bm{x_k}=\bm{x^*}}
\end{align}
is equal or less one. If~$\vert \sigma(A)\vert<1$ the equilibrium point is asymptotically stable.
\end{mytheo}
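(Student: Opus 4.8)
The plan is to read off from the proof of Proposition~3 that a deterministic GP-SSM with polynomial covariance function is the discrete-time system $\bm x_{k+1}=\bm f(\bm x_k)$ whose $i$-th component is the polynomial
$f_i(\bm x_k)=\sum_{l_1+\ldots+l_{n+1}=p_i}\alpha_{i,l_1,\ldots,l_{n+1}}\,x_{1,k}^{l_1}\cdots x_{n,k}^{l_n}\sigma_{i,0}^{2l_{n+1}}$,
so that $\bm f\colon\R^n\to\R^n$ is a polynomial, hence $C^\infty$, map and $\bm x^*$ is a fixed point of it. This reduces the claim to the discrete-time version of Lyapunov's indirect (first) method, and the matrix $A$ in the statement is exactly the Jacobian $\partial\bm f/\partial\bm x_k$ evaluated at $\bm x_k=\bm x^*$.

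First I would shift the equilibrium to the origin via the error $\bm e_k:=\bm x_k-\bm x^*$. Since $\bm f(\bm x^*)=\bm x^*$, Taylor expansion of the polynomial $\bm f$ about $\bm x^*$ gives $\bm e_{k+1}=A\,\bm e_k+\bm g(\bm e_k)$, where $\bm g$ collects the monomials of degree $\ge 2$ in $\bm e_k$; being polynomial, $\bm g$ satisfies $\|\bm g(\bm e)\|\le c\|\bm e\|^2$ on any bounded neighbourhood of the origin. Next, assuming $|\sigma(A)|<1$ (all eigenvalues strictly inside the unit disc, i.e. $A$ Schur), the discrete Lyapunov equation $A^\top P A-P=-Q$ has a unique solution $P\succ 0$ for any chosen $Q\succ 0$. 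Taking $V(\bm e)=\bm e^\top P\bm e$ and computing $\Delta V(\bm e)=V(A\bm e+\bm g(\bm e))-V(\bm e)=-\bm e^\top Q\bm e+2\bm e^\top A^\top P\,\bm g(\bm e)+\bm g(\bm e)^\top P\,\bm g(\bm e)$, the last two terms are $O(\|\bm e\|^3)$ by the bound on $\bm g$, so on a sufficiently small ball $\|\bm e\|\le r$ one gets $\Delta V(\bm e)\le -\tfrac12\lambda_{\min}(Q)\|\bm e\|^2<0$; the standard discrete-time Lyapunov argument then yields local asymptotic stability of $\bm e=0$, i.e. of $\bm x^*$.

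For the weaker ``stable if $|\sigma(A)|\le 1$'' part, the case $|\sigma(A)|<1$ is already covered (asymptotic stability implies Lyapunov stability). The remaining case — an eigenvalue of modulus exactly one — is the main obstacle: there linearization is genuinely inconclusive, since the degree-$\ge 2$ terms of $\bm g$ decide, so one must either read the statement as tacitly excluding the boundary or add a hypothesis (unit-modulus eigenvalues semisimple, and the dynamics restricted to the associated centre subspace non-expansive) and invoke a centre-manifold reduction. I would flag this explicitly rather than attempt a general treatment, noting — as is already done after Proposition~2 and the linear stability theorem — that $|\sigma(A)|=1$ is non-generic for learned data. The remaining ingredients, namely Taylor's theorem for polynomial maps, solvability of the discrete Lyapunov equation, and the converse Lyapunov argument, are all classical and require no new work.
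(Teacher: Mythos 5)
Your proposal is correct and is essentially the argument the paper intends: the paper's proof is a two-sentence appeal to Lyapunov linearization at the equilibrium (misnamed the ``direct method''), and you have simply supplied the standard details via the discrete Lyapunov equation $A^\top P A - P = -Q$ and the quadratic Lyapunov function $V(\bm e)=\bm e^\top P \bm e$. Your caveat about eigenvalues of modulus exactly one is well taken --- linearization is genuinely inconclusive there, so the ``equal or less one'' clause of the statement is not actually established by either your argument or the paper's, and flagging it as a non-generic boundary case (as the paper does for the linear kernel) is the right call.
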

\begin{proof}
The theorem is a direct application of Lyapunovs direct method. Since the polynomial function is smooth, the derivative exists.
\end{proof}
\section{Simulations}
\subsection{Equilibrium points}
In this section we want to present some illustrations for the equilibrium sets with different covariance functions. For this purpose, 100 randomly linear systems are generated:
\begin{align}
\bm{x_{k+1}}=\begin{bmatrix}
a_{11} & a_{12}\\a_{21} & a_{22}
\end{bmatrix}
\bm{x_{k}}+n
\label{for:ex_lin_sys}
\end{align}
where~$a_{11},a_{12},a_{21},a_{22}$ are random numbers drawn from the uniform distribution on the open interval~$]0,1[$ and with~$n$ as Gaussian distributed noise~$\mathcal{N}(0,0.05^2)$. Each system is learned by a GP-SSM with 100 homogeneously distributed training points on~$[-1,1]\times[-1,1]$ and 5 different covariance functions (linear, polynomial with~$p=2,3,5$, squared exponential), see Table \ref{tab:kernel}.\\
The hyperparameters are optimized according to the log. likelihood function with a conjugate gradient method. The equilibrium points are numerically estimated by local solvers which start from multiple points in~$[-20,20]\times[-20,20]$ . As comparison, the same procedure is applied with random generated nonlinear system which have a multiple number of equilibrium points:
\begin{align}
\bm{x_{k+1}}=\begin{bmatrix}
\sin(\alpha_1 x_{2,k})+x_{1,k}\\
\sin(\alpha_2 x_{1,k})+x_{2,k}
\end{bmatrix}
+n
\end{align}
where~$\alpha_1,\alpha_2$ are random numbers drawn from the uniform distribution on the open interval~$]0,\frac{3}{2}\pi[$ and with~$n$ representing Gaussian distributed noise~$\mathcal{N}(0,0.05^2)$.
Now, the starting area of the multiple local solvers is~$[-5,5]\times[-5,5]$.
Table~\ref{tab:nr_eq} and Table~\ref{tab:nr_eq_nonlin} show the number of found equilibrium points of the trained GP-SSMs by the linear and the nonlinear systems.
\begin{table}[h]
  \begin{center}
    \pgfplotstabletypeset[
    every head row/.style={after row=\hline},
       create on use/newcol/.style={
        create col/set list={Linear\vphantom{\large A},Polynomial~$p=2$,Polynomial~$p=3$,Polynomial~$p=5$,Squared Exp.}
    },
    columns/newcol/.style={string type,column name=Cov.func./ \# of Equlib.},
    columns={newcol,0,1,2,3,4},
    columns/0/.style={column type=|c},
        ]{data/nrs_eq.dat} 
  \end{center}
  \caption{Number of equilibrium points of 100 GP-SSMs each trained by a randomly generated 2-dimensional, linear systems.\label{tab:nr_eq}}
\end{table}
\begin{table}[h]
\vspace{-0.6cm}
  \begin{center}
    \pgfplotstabletypeset[
    every head row/.style={after row=\hline},
       create on use/newcol/.style={
        create col/set list={Linear\vphantom{\large A},Polynomial~$p=2$,Polynomial~$p=3$,Polynomial~$p=5$,Squared Exp.}
    },
    columns/newcol/.style={string type,column name=Cov.func./ \# of Equlib.},
    columns={newcol,0,1,2,3,4,5},
    columns/3/.style={column name={$[3,4]$}},
    columns/4/.style={column name={$[5,9]$}},
    columns/5/.style={column name={$[10,19]$}},
    columns/0/.style={column type=|c},
        ]{data/nrs_eq_nonlin.dat} 
  \end{center}
  \caption{Number of equilibrium points of 100 GP-SSMs trained by randomly generated 2-dimensional, sinusoidal systems.\label{tab:nr_eq_nonlin}}
\end{table}
\vspace{-0.4cm}
Since it is very unlikely that a GP-SSM with linear covariance function trained by the system (\ref{for:ex_lin_sys}) has zero or infinity equilibrium points, the tables shows just an quantity of one. The polynomial covariance function has always equal or less than~$p^2$ equilibrium points and the squared exponential GP-SSMs more than zero.
\subsection{Stability example}
Due to the fact that the squared exponential function is one of the most used covariance function, we present here an example for the boundedness of this kind of GP-SSM. This example deals with the well-known, nonlinear system the Van der Pol oscillator. The discretization of the oscillator is described by~\cite{van2010new} with 
\begin{align}
	x_{k+1}&=\phi(T,x_k,y_k,\epsilon) \Psi(x_k,y_k)T\notag\\
	&+(\varphi(T,x_k,y_k,\epsilon)+1)x_k+n_1\notag\\
	y_{k+1}&=\phi(T,x_k,y_k,\epsilon) \Lambda(x_k,y_k)T\notag\\
	&+(\varphi(T,x_k,y_k,\epsilon)+1)y_k+n_2\label{for:vanderpol}
\end{align} 
where~$T\in\R$ is the sample time and the parameter~$\epsilon\in\R$ a scalar which influence the nonlinearity of the system. For this example~$\epsilon$ is set to~$-0.8$ and the sample time~$T=0.1$. Gaussian distributed noise is added by~$n_1,n_2\sim\mathcal{N}(0,0.01^2)$\\
A GP-SSM with squared exp. covariance function is trained with 441 homogeneous distributed points in~$[-4,4]\times[-4,4]$. The hyperparameters are optimized by the minimization of the log. likelihood function with a conjugate gradient method. Figure~\ref{fig:fig4} shows for~$x_0=-1.8,y_0=0$ the trajectory of the system (\ref{for:vanderpol})~$x_k,y_k$ and the prediction of the trained GP-SSM~$\bar{x}_k,\bar{y}_k$. Since the trajectory stays inside the training area, the predicted trajectory is very similar. Furthermore, the boundedness of the trained GP-SSM is demonstrated.\\
An other example with a different initial value is presented in Fig.~\ref{fig:fig5}. The graph shows the resulting trajectory for the initial values~$x_0=2.2,y_0=0$. 
   \begin{figure}[ht]
   \vspace{-0.2cm}
	\begin{tikzpicture}
\begin{axis}[
  xlabel={Time [s]},
  ylabel=State,
  line width=0.7pt,
  grid style={dashed,gray},
  grid = both,
  height=6.2cm,
  width=\columnwidth]
\addplot+[mark=+] table [x index=0,y index=1]{data/gt1.dat};
\addlegendentry{$x_k$}
\addplot+[mark=+] table [x index=2,y index=3]{data/gt1.dat};
\addlegendentry{$y_k$}
\addplot+[mark=x,color=green] table [x index=0,y index=1]{data/gp1.dat};
\addlegendentry{$\bar{x}_k$}
\addplot+[mark=x] table [x index=2,y index=3]{data/gp1.dat};
\addlegendentry{$\bar{y}_k$}
\end{axis}
\end{tikzpicture} 
      \normalfont{\vspace{-0.2cm}\caption{The prediction~$\bar{x}_k,\bar{y}_k$ of a GP-SSM with squared exponential covariance function is always bounded. With~$x_0=-1.8,y_0=0$ the prediction and the trajectory of (\ref{for:vanderpol}) are quite similar.\label{fig:fig4}}} 
   \end{figure}
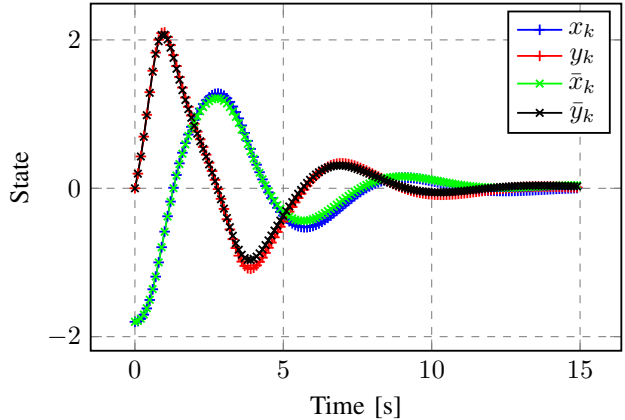
\begin{figure}[ht]
\vspace{-0.5cm}
\begin{tikzpicture}
\begin{axis}[
  xlabel={Time [s]},
  ylabel=State,
  legend pos=south west,
  grid style={dashed,gray},
  grid = both,
  ymin=-35,ymax=25,
  line width=0.7pt,
  height=6.2cm,
  width=\columnwidth]
      \addplot+[mark=+] table [x index=0,y index=1]{data/gt2.dat};
\addlegendentry{$x_k$}
\addplot+[mark=+] table [x index=2,y index=3]{data/gt2.dat};
\addlegendentry{$y_k$}
\addplot+[mark=x,color=green] table [x index=0,y index=1]{data/gp2.dat};
\addlegendentry{$\bar{x}_k$}
\addplot+[mark=x] table [x index=2,y index=3]{data/gp2.dat};
\addlegendentry{$\bar{y}_k$}
\end{axis}
\end{tikzpicture} 
      \normalfont{\vspace{-0.2cm}\caption{The prediction~$\bar{x}_k,\bar{y}_k$ of a GP-SSM with squared exponential covariance function is always bounded even if the trajectory~$x_k,y_k$ of the original system is unbounded.\label{fig:fig5}}}
      \vspace{-0.2cm}
\end{figure}
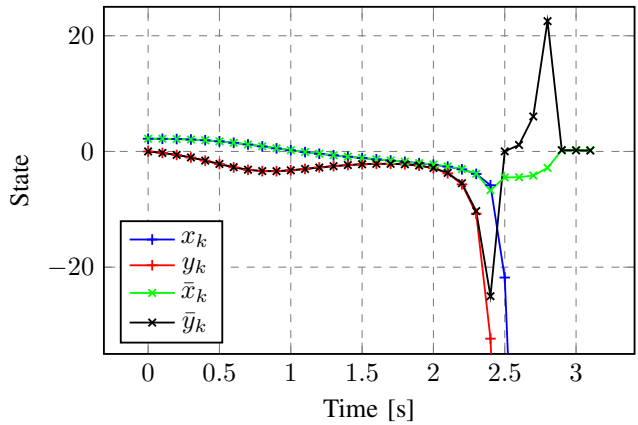
This initial point is not inside the attraction area of the oscillator and the trajectory~$x_k,y_k$ of the system is not bounded. Although the original trajectory is unstable, the prediction of the GP-SSM is bounded.
\section*{Conclusion}
In this paper, we investigate the equilibrium points and stability properties of Gaussian Process State Space Models (GP-SSMs) with different covariance functions. In particular, we study GP-SSMs with squared exponential, linear, and polynomial covariance function. A deterministic GP-SSM with the widely spread squared exponential covariance function generates always at least one equilibrium and is globally uniformly ultimately bounded. Therefore, it is not possible to learn unbounded trajectories with this approach.\\
The linear covariance function generates one equilibrium point except for pathological cases. The number of equilibrium points of a GP-SSM with polynomial function is always equal or less than the degree of the polynomial. Two examples visualize the shown properties.

\section*{ACKNOWLEDGMENTS}
The research leading to these results has received funding from the European Research Council under the European Union Seventh Framework Program (FP7/2007-2013) / ERC Starting Grant ``Control based on Human Models (con-humo)'' agreement n\textsuperscript{o}337654.


\bibliography{mybib}
\bibliographystyle{ieeetr}

\end{document}